\documentclass[]{paper}

\title{An Explicit Solution for the Problem of Optimal Investment with Random Endowment}
\author{
  Michael Donisch\thanks{
    \myaffil\
    E-mail: michael.donisch@hotmail.de
  }
  \and
  Christoph Knochenhauer\thanks{
    \myaffil\
    E-mail: \myemail
  }
}
\date{}

\newcommand{\claims}{\mathcal{C}}

\begin{document}

\maketitle

\begin{abstract}
We consider the problem of optimal investment with random endowment in a Black--Scholes market for an agent with constant relative risk aversion.
Using duality arguments, we derive an explicit expression for the optimal trading strategy, which can be decomposed into the optimal strategy in the absence of a random endowment and an additive shift term whose magnitude depends linearly on the endowment-to-wealth ratio and exponentially on time to maturity.

%
%
\end{abstract}

\section{Introduction}

In this brief note we consider the problem of optimal investment in the presence of a random endowment.
We consider a simple Black--Scholes market with an additional endowment process modeled as a geometric Brownian motion.
The objective is to optimize expected utility of terminal wealth for a power utility agent.

Optimal investment problems with random endowment have received significant interest over the last decades.
We refer to \cite{belak2022optimal} and the references therein for a recent in depth discussion and restrict our overview of the related literature to those works closest to ours.
In \cite{duffie1997hedging}, the authors consider a variant of our model over an infinite time horizon for an agent with a HARA utility function.
\cite{bick2013solving} compute almost optimal trading strategies in a very similar setting.
A dual approach was used in \cite{cvitanic2001utility} for a large class of utility functions and bounded endowment processes.
\cite{hu2005utility} and \cite{davis2006optimal} consider exponential utility functions, and \cite{horst2014forward} extend the results of \cite{hu2005utility} to the power utility case.
Finally, \cite{mostovyi2017optimal} and \cite{mostovyi2020optimal} provide results in very general market settings.
Most of the articles mentioned here aim for existence results for optimal trading strategies instead of explicit solutions.
This stems from the observation that optimal investment problems with random endowment are considered highly intractable---especially in incomplete market settings.

Closest to our paper is the recent article \cite{belak2022optimal}.
There, the authors consider a market model and optimization problem quite similar to ours, with the main difference being that we restrict to the case in which the noise driving the risky asset and the endowment process are perfectly correlated, making our market model complete.
The authors of \cite{belak2022optimal} follow a dynamic programming approach and develop a sophisticated argument to construct an optimal trading strategy by showing that the Hamilton--Jacobi--Bellman equation admits a classical solution coinciding with the value function, deriving a candidate optimal strategy in terms of the derivatives of the value function, and finally arguing that any optimal strategy must be equal to their candidate.
This approach comes with a significant technical burden, using deep results from viscosity solution theory and regularity results for non-linear partial differential equations, forcing the authors to work with restrictive assumptions on the set of admissible trading strategies.

In the present paper, we rely instead on duality arguments to solve the optimal investment problem.
As it turns out, this has significant advantages over the dynamic programming approach: we establish optimality of the trading strategy in a much larger class of strategies whilst relying on much more elementary arguments.
We regard this as one of the two main contributions of this article.

Our second main contribution stems from the fact that we obtain an explicit formula for the optimal trading strategy.
We show that the optimal fraction of wealth invested into the risky asset takes the simple form
\[
  \pi^*_t = \pi_M + \beta(t)\Bigl(\pi_M - \frac{\eta}{\sigma}\Bigr)\frac{E_t}{X^*_t},
  \qquad t\in[0,T],
\]
where $\pi_M$ is the optimal risky fraction in the absence of a random endowment, $\beta$ is a strictly positive and strictly decreasing function of time, $\eta$ is the volatility of the endowment process, $\sigma$ is the volatility of the risky asset, and $E/X^*$ is the endowment-to-wealth ratio.
We are not aware of any other setting in the literature in which an explicit optimal trading strategy in case of a random endowment has been found.

The remainder of this article is structured as follows.
In Section~\ref{sec:model-problem} we introduce the market model and formulate the optimal investment problem in the presence of a random endowment.
We proceed in Section~\ref{sec:duality} by introducing the dual problem and explicitly computing the optimal terminal wealth.
We conclude in Section~\ref{sec:optimal-strategy} by deriving an explicit expression for the optimal trading strategy and discussing its properties.

\section{Market Model and Problem Formulation}
\label{sec:model-problem}

We take as given a probability space $(\Om,\Sig,\Prob)$ sufficiently rich to support a standard one-dimensional Brownian motion $W = \{W_t\}_{t\in[0,T]}$ defined on a time interval $[0,T]$ with $T>0$.
Denote by $\Filt = \{\Filt_t\}_{t\in[0,T]}$ the filtration generated by $W$ augmented by the $\Prob$-nullsets.
We consider an agent investing in a Black--Scholes market consisting of a risk-free asset $S^0 = \{S^0_t\}_{t\in[0,T]}$ and a risky asset $S^1 = \{S^1_t\}_{t\in[0,T]}$ with dynamics
\[
  \de S^0_t = rS^0_t\de t\qquad\text{and}\qquad \de S^1_t = (r+\lambda)S^1_t\de t + \sigma S^1_t\de W_t,\qquad t\in[0,T].
\]
Here, $r\in\R$ denotes the risk-free rate, $\lambda\in\R$ the excess return, and $\sigma>0$ the volatility.
Without loss of generality, we let $S^0_0 = S^1_0 = 1$.
Next, we assume that the agent receives a random endowment $E = \{E_t\}_{t\in[0,T]}$ modeled by as a geometric Brownian motion, that is
\[
  \de E_t = \mu E_t\de t + \eta E_t \de W_t,\qquad t\in[0,T],
\]
where $\mu\in\R$ and $\eta>0$.
We suppose that $E_0>0$, which implies $E_t>0$ for all $t\in[0,T]$.

Trading strategies are modeled in terms of risky fractions, that is, $\R$-valued and $W$-integrable stochastic processes $\pi = \{\pi_t\}_{t\in[0,T]}$ representing the fraction of wealth invested into the risky asset $S^1$.
We denote the set of all such processes by $\Adm_0$.
Under the usual self-financing condition, given an initial wealth of $x>0$ and a risky fraction $\pi\in\Adm_0$, the agent's wealth $X^\pi = \{X^\pi_t\}_{t\in[0,T]}$ in the presence of the random endowment $E$ evolves as
\[
  \de X^\pi_t = \bigl((r + \lambda\pi_t)X^\pi_t + E_t\bigr)\de t + \sigma\pi_tX^\pi_t\de W_t,\qquad t\in[0,T],\qquad X^\pi_0 = x.
\]
Observe that since $x>0$ and $E_t>0$ for all $t\in[0,T]$, we must have $X^\pi_t>0$ for all $t\in[0,T]$.
We furthermore note for future reference that the dynamics of the wealth process can equivalently be expressed as
\[
  \de X^\pi_t = \bigl((r + \sigma\theta\pi_t)X^\pi_t + E_t\bigr)\de t + \sigma\pi_tX^\pi_t\de W_t,\qquad t\in[0,T],
\]
where $\theta\defined \lambda/\sigma$ is the market price of risk in the Black--Scholes model.
Recall that the market price of risk can be used to define the martingale density process $Z=\{Z_t\}_{t\in[0,T]}$ given by
\[
  \de Z_t = - \theta Z_t\de W_t,\qquad t\in[0,T],\qquad Z_0 = 1,
\]
which in turn gives rise to the risk neutral pricing measure $\ProbAlt\sim\Prob$ via the family of Radon--Nikodým densities
\[
  \frac{\de\ProbAlt}{\de\Prob}\Big|_{\Filt_t} \defined Z_t,\qquad t\in[0,T].
\]
Moreover, we denote by $H = \{H_t\}_{t\in[0,T]}$ the state-price deflator defined as $H \defined Z / S^0$ with dynamics
\[
  \de H_t = -rH_t\de t - \theta H_t\de W_t,\qquad t\in[0,T],\qquad H_0 = 1.
\]
Recall that any contingent claim with sufficiently integrable running payoff $C=\{C_t\}_{t\in[0,T]}$ and terminal payoff $\xi$ has a unique arbitrage-free price given by
\[
  \E^\ProbAlt\Bigl[\int_0^T e^{-rt}C_t\de t + e^{-rT}\xi\Bigr] = \E\Bigl[\int_0^T C_tH_t\de t + \xi H_T\Bigr],
\]
signifying the role of the state-price deflator.

Let us now turn towards the formulation of the optimal investment problem.
We begin by fixing a power utility function
\[
  U : (0,\infty) \to \R,\qquad x \mapsto U(x) \defined \frac{x^{1-\gamma}}{1-\gamma},
\]
where $\gamma\in(0,1)\cup(1,\infty)$ denotes the agent's constant relative risk aversion parameter.
Next, let us introduce the set of admissible risky fractions as
\[
  \Adm \defined \bigl\{ \pi\in\Adm_0 : \E\bigl[U_-(X^\pi_T)\bigr] < \infty \bigr\},
\]
where $U_- \defined \max\{-U,0\}$ denotes the negative part of $U$.
With this, the problem of optimal investment in a Black--Scholes market with random endowment can be formulated as
\begin{equation}\label{eq:op}\tag{OPT}
  \text{maximize }\E\bigl[ U\bigl(X^\pi_T\bigr) \bigr]\text{ over all }\pi\in\Adm.
\end{equation}

\section{The Dual Problem and Optimal Terminal Wealth}
\label{sec:duality}

We follow a dual approach to solve the optimization problem \eqref{eq:op}.
We start with several auxiliary results facilitating the formulation of the dual problem.
Then we use a Lagrangian approach to compute the optimal terminal wealth in the presence of a random endowment and finally conclude this section by constructing a risky fraction which replicates the optimal terminal wealth.

\subsection{Preliminaries: Budget Constraint and Replication}

In this subsection, we gather several results which play a crucial role throughout the entire article.
The first result explicitly computes the arbitrage-free price of the random endowment.

\begin{lemma}[Arbitrage-Free Price of Endowment]
For any $t\in[0,T]$, it holds that
\[
  P_t \defined \E\Bigl[\int_0^t E_s H_s \de s\Bigr] = E_0\frac{\exp\{(\mu-r-\eta\theta)t\}-1}{\mu-r-\eta\theta} < \infty.\closeEqn
\]
\end{lemma}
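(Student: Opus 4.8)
The plan is to compute the expectation $\E[E_sH_s]$ in closed form for each fixed $s\in[0,t]$ and then integrate over $s$. Since $E$ and $H$ are both strictly positive, the integrand $E_sH_s$ is non-negative, so Tonelli's theorem applies without any integrability hypothesis and yields $P_t = \int_0^t \E[E_sH_s]\,\de s$. In particular, finiteness of $P_t$ is not something we need to verify separately: it will follow automatically once the closed form of the integral is obtained.

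For the pointwise computation I would solve the two linear stochastic differential equations explicitly, obtaining
\[
  E_s = E_0\exp\bigl\{(\mu-\tfrac12\eta^2)s+\eta W_s\bigr\}
  \qquad\text{and}\qquad
  H_s = \exp\bigl\{(-r-\tfrac12\theta^2)s-\theta W_s\bigr\}.
\]
Multiplying these gives $E_sH_s = E_0\exp\{(\mu-r-\tfrac12\eta^2-\tfrac12\theta^2)s+(\eta-\theta)W_s\}$, and since $W_s$ is centred Gaussian with variance $s$ we have $\E[\exp\{(\eta-\theta)W_s\}] = \exp\{\tfrac12(\eta-\theta)^2 s\}$. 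Expanding $\tfrac12(\eta-\theta)^2 = \tfrac12\eta^2-\eta\theta+\tfrac12\theta^2$, the squared terms cancel and we are left with $\E[E_sH_s] = E_0\exp\{(\mu-r-\eta\theta)s\}$. Alternatively, one may pass to the pricing measure $\ProbAlt$, under which $W^\ProbAlt_s\defined W_s+\theta s$ defines a Brownian motion and $E$ acquires drift coefficient $\mu-\eta\theta$, so that $\E^\ProbAlt[e^{-rs}E_s] = E_0\exp\{(\mu-r-\eta\theta)s\}$ can be read off directly; both routes produce the same integrand, which is reassuring as a consistency check.

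It then only remains to integrate: $P_t = E_0\int_0^t \exp\{(\mu-r-\eta\theta)s\}\,\de s = E_0\,\dfrac{\exp\{(\mu-r-\eta\theta)t\}-1}{\mu-r-\eta\theta}$, where in the degenerate case $\mu-r-\eta\theta=0$ the right-hand side is understood as its limiting value $E_0 t$. In all cases the expression is finite, which establishes the claim.

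I do not anticipate a genuine obstacle here: the statement reduces to an elementary Gaussian moment computation together with an application of Tonelli's theorem. The only two points deserving an explicit word are the justification for interchanging expectation and time integral (provided by non-negativity of the integrand) and the interpretation of the stated formula when the denominator vanishes.
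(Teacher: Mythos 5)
Your proposal is correct and follows essentially the same route as the paper: explicit solutions of the two SDEs, the Gaussian moment generating function to get $\E[E_sH_s]=E_0e^{(\mu-r-\eta\theta)s}$, and Tonelli's theorem (justified by positivity of the integrand) to integrate in $s$. Your remark that the stated formula must be read as its limit $E_0t$ when $\mu-r-\eta\theta=0$ is a small point the paper leaves implicit, but it does not change the argument.
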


\begin{proof}
We first make use of the fact that the stochastic differential equations for $E$ and $H$ have explicit solutions given by
\begin{equation}\label{eq:explicit-solutions}
  E_s = E_0e^{(\mu - \eta^2/2)s + \eta W_s}
  \qquad\text{and}\qquad
  H_s = e^{-( r + \theta^2/2)s - \theta W_s},
  \qquad s\in[0,T],
\end{equation}
from which we conclude that
\[
  \E\bigl[E_s H_s\bigr]
  = \E\Bigl[E_0e^{(\mu - \eta^2/2)s + \eta W_s}e^{-(r + \theta^2/2)s - \theta W_s}\Bigr]
  = E_0e^{(\mu - r - \eta^2/2 - \theta^2/2)s}\E\Bigl[e^{(\eta-\theta)W_s}\Bigr].
\]
The expectation can be identified as the moment generating function of the normal distribution, implying that
\begin{equation}\label{eq:EH-expectation}
  \E\bigl[E_s H_s\bigr]
  = E_0e^{(\mu - r - \eta^2/2 - \theta^2/2)s}e^{s(\eta-\theta)^2/2}
  = E_0e^{(\mu - r - \eta\theta)s}.
\end{equation}
But then Tonelli's theorem yields
\[
  \E\Bigl[\int_0^t E_s H_s \de s\Bigr]
  = \int_0^t \E\bigl[E_s H_s\bigr] \de s
  = E_0\frac{\exp\{(\mu-r-\eta\theta)t\}-1}{\mu-r-\eta\theta},
\]
thus concluding the proof.
\end{proof}

Let us take special note of the constant
\begin{equation}\label{eq:endow-price}
  P_T = \E\Bigl[\int_0^T E_t H_t \de t\Bigr] = E_0\frac{\exp\{(\mu-r-\eta\theta)T\}-1}{\mu-r-\eta\theta},
\end{equation}
which corresponds to the arbitrage-free price of the random endowment over the entire trading period.
This constant will appear in numerous equations below.

Moving on, we establish a variant of the classical budget constraint lemma in the presence of a random endowment.

\begin{proposition}[Budget Constraint]\label{prop:budget-constraint}
Let $\pi\in\Adm_0$ be any risky fraction. Then the process $Y^\pi = \{Y^\pi_t\}_{t\in[0,T]}$ given by
\[
  Y^\pi_t \defined X^\pi_t H_t - \int_0^t E_s H_s\de s,\qquad t\in[0,T],
\]
is a supermartingale under $\Prob$.
In particular, the budget constraint
\[
  \E\Bigl[X^\pi_t H_t - \int_0^t E_sH_s\de s\Bigr] = \E[Y^\pi_t] \leq Y^\pi_0 = x
\]
holds for all $t\in[0,T]$.\close
\end{proposition}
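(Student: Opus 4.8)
The plan is to show that $Y^\pi$ is a local supermartingale and then upgrade this to a true supermartingale using a nonnegativity argument and Fatou's lemma. First, I would compute the dynamics of $Y^\pi$. Writing $N^\pi_t \defined X^\pi_t H_t$ and applying the product rule (Itô) to the dynamics of $X^\pi$ and $H$ given in the excerpt, the finite-variation part of $N^\pi$ should turn out to be exactly $E_t H_t\,\de t$: indeed, the $\de t$-term of $X^\pi$ is $(r+\sigma\theta\pi_t)X^\pi_t + E_t$, the $\de t$-term of $H$ is $-rH_t$, and the cross-variation term is $-\sigma\theta\pi_tX^\pi_tH_t$; adding $X^\pi_t \cdot (-rH_t\,\de t)$, $H_t \cdot ((r+\sigma\theta\pi_t)X^\pi_t + E_t)\,\de t$, and the cross term, the drift collapses to $E_tH_t\,\de t$. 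Hence $\de Y^\pi_t = \de N^\pi_t - E_tH_t\,\de t$ has zero drift, i.e.\ $Y^\pi$ is a local martingale.

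Next I would observe that $Y^\pi$ is bounded below: since $X^\pi_t > 0$ and $H_t > 0$, we have $X^\pi_tH_t \geq 0$, and $\int_0^t E_sH_s\,\de s \leq \int_0^T E_sH_s\,\de s =: \Xi$, where $\Xi$ is integrable by the preceding Lemma (its expectation is $P_T < \infty$). Therefore $Y^\pi_t \geq -\Xi$ for all $t$, with $\Xi \in L^1(\Prob)$. A local martingale bounded below by an integrable random variable is a supermartingale: take a localizing sequence of stopping times $\tau_n \uparrow \infty$ (equivalently $\tau_n \uparrow T$) such that $Y^\pi_{\cdot\wedge\tau_n}$ is a true martingale; then for $s \leq t$,
\[
  \E\bigl[Y^\pi_{t\wedge\tau_n}\mid\Filt_s\bigr] = Y^\pi_{s\wedge\tau_n},
\]
and letting $n\to\infty$, the left-hand side converges by the conditional Fatou lemma (applicable since $Y^\pi_{t\wedge\tau_n} + \Xi \geq 0$) to something dominated by $\E[Y^\pi_t\mid\Filt_s]$ from above, while the right-hand side converges to $Y^\pi_s$ by continuity. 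This yields $\E[Y^\pi_t\mid\Filt_s] \leq Y^\pi_s$, i.e.\ the supermartingale property. Taking $s=0$ and using $Y^\pi_0 = X^\pi_0H_0 - 0 = x$ gives the stated budget constraint $\E[Y^\pi_t] \leq x$.

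The main obstacle is the passage from local martingale to supermartingale: one must be careful that the integrability needed for the conditional Fatou argument is genuinely available, which is exactly why the lower bound $-\Xi$ with $\Xi\in L^1$ matters, and this in turn relies essentially on the finiteness of $P_T$ established in the previous Lemma. Everything else — the Itô computation of the drift and the identification of the finite-variation part as $E_tH_t\,\de t$ — is a routine (if slightly tedious) application of the product rule, and I would present it compactly rather than expanding every term.
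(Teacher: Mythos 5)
Your proposal is correct and follows essentially the same route as the paper: Itô's product rule to show that $Y^\pi$ is a local martingale (the drift of $X^\pi H$ collapsing to $E_tH_t\,\de t$), followed by localization and a conditional Fatou argument to upgrade to a true supermartingale. The only cosmetic difference is that you use the single integrable lower bound $-\int_0^T E_sH_s\,\de s$ and apply Fatou to the shifted process, whereas the paper applies Fatou to the nonnegative term $X^\pi H$ and monotone convergence to the endowment integral; both hinge on the finiteness of $P_T$ in the same way.
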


\begin{proof}
It\={o}'s product rule reveals
\begin{equation}\label{eq:proof-budget}
  \de Y^\pi_t
  = X^\pi_t\de H_t + H_t\de X^\pi_t + \de X^\pi_t\de H_t - E_tH_t\de t\\
  = (\sigma\pi_t-\theta)X^\pi_tH_t\de W_t,\quad t\in[0,T],
\end{equation}
from which we conclude that $Y^\pi$ is a (continuous) local martingale.
Now let $\{\tau_k\}_{k\in\N}$ be a localizing sequence and fix $s,t\in[0,T]$ with $s<t$.
Then
\begin{align*}
  Y^\pi_s = \liminf_{k\to\infty} Y^\pi_{s\wedge\tau_k}
  &=\liminf_{k\to\infty}\E\bigl[Y^\pi_{t\wedge\tau_k}\givenbig\Filt_s]\\
  &\geq \liminf_{k\to\infty}\E\bigl[ X^\pi_{t\wedge\tau_k}H_{t\wedge\tau_k} \givenbig\Filt_s\bigr] - \liminf_{k\to\infty}\E\Bigl[ \int_0^{t\wedge\tau_k} E_rH_r \de r \givenBig\Filt_s\Bigr].
\end{align*}
Since $X^\pi,H,E > 0$, we can apply Fatou's lemma for the first expectation and monotone convergence for the second expectation to conclude that
\[
  Y^\pi_s
  \geq \E\Bigl[ X^\pi_tH_t - \int_0^t E_rH_r \de r \givenBig\Filt_s\Bigr]
  = \E[Y^\pi_t\given\Filt_s],
\]
so $Y^\pi$ is indeed a supermartingale.
\end{proof}

The final preliminary result concerns replication of arbitrary contingent claims in the presence of random endowments.

\begin{proposition}[Replication]\label{prop:replication}
Let $\xi\geq 0$ be an $\Filt_T$-measurable random variable such that
\[
  x_\xi \defined \E\Bigl[\xi H_T - \int_0^T E_tH_t\de t\Bigr] < \infty.
\]
Then there exists a risky fraction $\pi\in\Adm_0$ such that
\[
  X^\pi_0 = x_\xi\qquad\text{and}\qquad X^\pi_T = \xi.\closeEqn
\]
\end{proposition}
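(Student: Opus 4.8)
The plan is to construct the replicating strategy by first building the wealth process and then reading off the risky fraction. Set
\[
  V_t \defined \frac{1}{H_t}\E\Bigl[\xi H_T - \int_0^T E_sH_s\de s \Bigm| \Filt_t\Bigr] + \frac{1}{H_t}\int_0^t E_sH_s\de s,
  \qquad t\in[0,T],
\]
so that $V_0 = x_\xi$ and $V_T = \xi$. The idea is that $V$ should be the wealth process of the replicating strategy. First I would show $V>0$: since $\xi\geq 0$, $E_s>0$ and $H_s>0$, the conditional expectation defining the first term is nonnegative, and one checks using the budget-type identity $\E[\int_0^T E_sH_s\de s\mid\Filt_t] = \int_0^t E_sH_s\de s + \E[\int_t^T E_sH_s\de s\mid\Filt_t]$ that $V_t$ equals $\tfrac{1}{H_t}\E[\xi H_T\mid\Filt_t] + \tfrac{1}{H_t}\E[\int_t^T E_sH_s\de s\mid\Filt_t]$, which is a sum of two strictly positive terms (strictly positive because $E>0$ guarantees the second term is positive even if $\xi$ vanishes on some set).

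Next I would identify the dynamics of $V$. Introduce the martingale $M_t\defined \E[\xi H_T - \int_0^T E_sH_s\de s\mid\Filt_t]$; by the martingale representation theorem there is a $W$-integrable process $\psi$ with $\de M_t = \psi_t\de W_t$. Writing $H_tV_t = M_t + \int_0^t E_sH_s\de s$ and applying It\^o's product rule to $V_t = (H_tV_t)/H_t = (H_tV_t)\cdot H_t^{-1}$, using $\de H_t = -rH_t\de t - \theta H_t\de W_t$ and hence $\de(H_t^{-1}) = (r+\theta^2)H_t^{-1}\de t + \theta H_t^{-1}\de W_t$, I would collect terms to obtain
\[
  \de V_t = \bigl(rV_t + \text{(drift from }E\text{)} + \cdots\bigr)\de t + (\cdots)\de W_t.
\]
The key is that the $\de t$-terms must organize into exactly $(rV_t + E_t)\de t$; this is forced by the construction, since $H\cdot V - \int_0^\cdot E_sH_s\de s = M$ is a martingale, which is precisely the statement that $V$ satisfies the budget constraint with equality, and that pins down the drift to be $(rV_t + E_t)\de t$ just as in Proposition~\ref{prop:budget-constraint} (where equality in the supermartingale property corresponds to $\sigma\pi = \theta$ failing to be the only option — rather, here the local martingale $Y^\pi$ is a true martingale). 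Then I define $\pi_t$ by matching the diffusion coefficient: $\sigma\pi_tV_t = (\text{coefficient of }\de W_t\text{ in }\de V_t)$, i.e.
\[
  \pi_t \defined \frac{1}{\sigma V_t}\Bigl(\theta\frac{M_t + \int_0^t E_sH_s\de s}{H_t} + \frac{\psi_t}{H_t}\Bigr) = \frac{1}{\sigma V_t}\Bigl(\theta V_t + \frac{\psi_t}{H_t}\Bigr).
\]
Since $V>0$ and $V$ is continuous, $\pi$ is well-defined and $W$-integrable (the continuity and strict positivity of $V$ keep $1/V$ locally bounded, and $\psi/H$ is $W$-integrable), so $\pi\in\Adm_0$. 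By construction $V$ solves the wealth SDE $\de V_t = ((r+\sigma\theta\pi_t)V_t + E_t)\de t + \sigma\pi_tV_t\de W_t$ with $V_0 = x_\xi$, so by uniqueness of the SDE solution $X^\pi = V$, whence $X^\pi_0 = x_\xi$ and $X^\pi_T = V_T = \xi$.

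The main obstacle I anticipate is the bookkeeping in the It\^o computation that shows the drift of $V$ is exactly $(rV_t + E_t)\de t$ — one has to handle the $\int_0^t E_sH_s\de s$ term, which contributes both through the product rule and through the $\de t$-part, and be careful that no stray drift survives. A clean way around the brute-force calculation is to argue structurally: define $\pi$ by the diffusion-matching formula above, let $X^\pi$ be the corresponding wealth process, form $Y^\pi_t = X^\pi_tH_t - \int_0^t E_sH_s\de s$ as in Proposition~\ref{prop:budget-constraint}, and observe that both $Y^\pi$ and $M$ are continuous local martingales with the same initial value $x_\xi$ and — after checking the diffusion coefficients agree, which is exactly how $\pi$ was chosen — the same stochastic differential; hence $X^\pi_tH_t - \int_0^t E_sH_s\de s = M_t$ for all $t$, and evaluating at $t=T$ gives $X^\pi_T H_T = M_T + \int_0^T E_sH_s\de s = \xi H_T$, i.e. $X^\pi_T = \xi$. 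A minor additional point is to confirm $\pi\in\Adm$ is not required here (the statement only claims $\pi\in\Adm_0$), so no integrability of $U_-(X^\pi_T)$ need be verified at this stage.
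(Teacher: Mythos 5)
Your construction follows the same route as the paper's proof (same candidate wealth process $V$, obtained from the martingale $\E[\xi H_T-\int_0^T E_sH_s\de s\given\Filt_t]$, same diffusion-matching definition of $\pi$), but it has a genuine gap: the strict positivity of $V$, on which the well-definedness of $\pi_t = \frac{1}{\sigma V_t}\bigl(\theta V_t + \psi_t/H_t\bigr)$ rests, is based on a sign error. Carrying out the rearrangement correctly gives
\[
  H_tV_t \;=\; \E\bigl[\xi H_T\givenbig\Filt_t\bigr] \;-\; \E\Bigl[\int_t^T E_sH_s\de s\givenBig\Filt_t\Bigr],
\]
i.e.\ the value of the remaining endowment stream enters with a \emph{minus} sign, not a plus sign. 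This quantity is not positive in general: the proposition allows $\xi\equiv 0$, for which $V_t<0$ for $t<T$, and for generic claims $\xi\geq 0$ the two continuous processes above can cross, so $V$ may hit zero with positive probability. On the zero set of $V$ your formula for $\pi$ (and hence also your fallback ``structural'' argument, which uses the same $\pi$) is undefined, and the local boundedness of $1/V$ that you invoke for $W$-integrability fails as well.

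The paper's proof is designed precisely to avoid this issue: instead of dividing by the candidate wealth, it defines a feedback control $\hat\pi(t,\hat x) = \frac{1}{\sigma}\bigl(\frac{K_t}{\hat xH_t}\ind{\{\hat x\neq 0\}} + \theta\bigr)$, which is defined even at zero wealth, observes that after substitution the controlled wealth SDE becomes linear (so a unique strong solution exists), and then identifies $X^\pi$ with $X$ by comparing the stochastic integrals representing $X_tH_t-\int_0^t E_sH_s\de s$ and $X^\pi_tH_t-\int_0^t E_sH_s\de s$. To repair your argument you would either have to adopt this indicator/feedback device (or otherwise handle the zero set of $V$), or impose an additional hypothesis guaranteeing $V\neq 0$, which the statement of the proposition does not provide. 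A minor further slip: the drift of $V$ under $\Prob$ is $(r+\sigma\theta\pi_t)V_t+E_t$, not $rV_t+E_t$ (the latter is the risk-neutral drift), though this does not affect your diffusion-matching definition of $\pi$.
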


\begin{proof}
By martingale representation, there exists a $W$-integrable process $K$ such that
\[
  \xi H_T - \int_0^T E_tH_t\de t = x_\xi + \int_0^T K_t\de W_t
  \qquad\text{and}\qquad
  M\defined \int_0^{\argdot} K_t\de W_t\text{ is a martingale.}
\]
Now define a process $X = \{X_t\}_{t\in[0,T]}$ by
\begin{equation}\label{eq:proof-repl}
  X_t \defined \frac{1}{H_t}\Bigl[x_\xi + M_t + \int_0^t E_sH_s\de s\Bigr],\qquad t\in[0,T],
\end{equation}
and observe that $X_0 = x_\xi$ and
\[
  X_T = \frac{1}{H_T}\Bigl[x_\xi + M_T + \int_0^T E_sH_s\de s\Bigr] = \frac{1}{H_T}\Bigl[\xi H_T - \int_0^T E_tH_t\de t + \int_0^T E_sH_s\de s\Bigr] = \xi.
\]
It remains to show that there exists $\pi\in\Adm_0$ such that $X^\pi = X$.
For this, we first set
\begin{equation}\label{eq:feedback-repl}
  \hat{\pi}(t,\hat{x}) \defined \frac{1}{\sigma}\Bigl(\frac{K_t}{\hat{x}H_t}\ind{\{\hat{x}\neq 0\}} + \theta\Bigr),\qquad (t,\hat{x})\in[0,T]\times\R,
\end{equation}
and consider the stochastic differential equation
\[
  \de\hat{X}_t = \bigl((r + \sigma\theta\hat{\pi}(t,\hat{X}_t))\hat{X}_t + E_t\bigr)\de t + \sigma\hat{\pi}(t,\hat{X}_t)\hat{X}_t\de W_t,\qquad t\in[0,T],\qquad \hat{X}_0 = x_\xi.
\]
Using the definition of $\hat{\pi}$, the dynamics can in fact be written as
\[
  \de\hat{X}_t = \Bigl(\theta\frac{K_t}{H_t} + E_t + \bigl(r+\theta^2\bigr)\hat{X}_t\Bigr)\de t + \Bigl(\frac{K_t}{H_t} + \theta\hat{X}_t\Bigr)\de W_t,\qquad t\in[0,T],
\]
from which we see that a unique strong solution $\hat{X}$ indeed exists.
Moreover, upon defining
\begin{equation}\label{eq:repl-from-feedback}
  \pi_t \defined \hat{\pi}\bigl(t,\hat{X}_t\bigr),\qquad t\in[0,T],
\end{equation}
it follows that
\[
  \hat{X} = X^\pi\qquad\text{and hence}\qquad \pi = \hat{\pi}\bigl(\argdot,X^\pi\bigr).
\]
Now note that by rearranging Equation \eqref{eq:proof-repl} and using the definition of $M$, we can write
\[
  X_tH_t - \int_0^t E_sH_s\de s = x_\xi + M_t = x_\xi + \int_0^t K_s\de W_s,\qquad t\in[0,T].
\]
Conversely, as in the proof of Proposition~\ref{prop:budget-constraint} or more precisely Equation \eqref{eq:proof-budget}, we have
\[
  X^\pi_tH_t - \int_0^t E_sH_s\de s = x_\xi + \int_0^t (\sigma\pi_s-\theta)X^\pi_sH_s \de W_s,\qquad t\in[0,T].
\]
Thus, using the definition of $\pi$,
\[
  \bigl(X_t - X^\pi_t\bigr)H_t = \int_0^t \bigl(K_s - (\sigma\pi_s-\theta)X^\pi_sH_s\bigr) \de W_s = 0,\qquad t\in[0,T].
\]
As this holds for all $t\in[0,T]$ and $H>0$, we conclude that $X = X^\pi$.
\end{proof}

\subsection{Optimal Terminal Wealth with Random Endowment}

We now turn towards solving the optimal investment problem with random endowment \eqref{eq:op}.
The first step is to compute the optimal terminal wealth, for which we employ a Lagrangian approach.
For this, we first introduce the set $\claims(x)$ of $\Filt_T$-measurable random variables $\xi\geq 0$ satisfying
\[
  \E\Bigl[\xi H_T - \int_0^T E_tH_t\de t\Bigr] \leq x,
\]
where we recall that $x$ denotes the agent's initial wealth.
According to Proposition~\ref{prop:budget-constraint} (Budget Constraint), the set $\claims(x)$ can be interpreted as the set of option payoffs which can be replicated with an initial wealth of $x$.
To wit, we expect that
\[
  \sup_{\pi\in\Adm}\E\bigl[U\bigl(X^\pi_T\bigr)\bigr] = \sup_{\xi\in\claims(x)}\E\bigl[U(\xi)\bigr],
\]
suggesting that a first step towards solving the investment problem is to solve the static optimization problem on the right-hand side.

Towards this, denote by $\lambda > 0$ a Lagrange multiplier, fix $\xi\in\claims(x)$, and note that by definition of $\claims(x)$ it holds that
\begin{align}
  \E\bigl[U(\xi)\bigr]
  &\leq \E\bigl[U(\xi)\bigr] + \lambda\Bigl(x - \E\Bigl[\xi H_T - \int_0^T E_tH_t\de t\Bigr]\Bigr)\notag\\
  &= \E\bigl[U(\xi) - \lambda\xi H_T\bigr] + \lambda(x+P_T),\label{eq:dual}
\end{align}
where $P_T$ was defined in Equation \eqref{eq:endow-price}.
Now write $\tilde{U} : (0,\infty) \to \R$ for the Legendre--Fenchel transform of $U$, that is
\[
  \tilde{U}(y) \defined \sup_{x>0}\bigl\{U(x) - yx\bigr\},\qquad y > 0.
\]
Using this in Equation \eqref{eq:dual} and the fact that $\xi\in\claims(x)$ and $\lambda>0$ are chosen arbitrarily implies
\begin{equation}\label{eq:duality}\tag{DUAL}
  \sup_{\xi\in\claims(x)}\E\bigl[U(\xi)\bigr] \leq \inf_{\lambda>0}\Bigl\{\E\bigl[\tilde U\bigl(\lambda H_T\bigr)\bigr] + \lambda(x+P_T)\Bigr\}.
\end{equation}
The right-hand side is the referred to as the dual problem of \eqref{eq:op}.
From the derivation of the dual problem, it is straightforward to see under which conditions we have equality in \eqref{eq:duality}:
\begin{enumerate}
\item[(1)] There needs to exist
\[
  \lambda^*\in\arg\min_{\lambda>0}\Bigl\{\E\bigl[\tilde U\bigl(\lambda H_T\bigr)\bigr] + \lambda(x+P_T)\Bigr\}.
\]
\item[(2)] There needs to exist $\xi^*\in\claims(x)$ such that
\[
  \tilde U\bigl(\lambda^* H_T\bigr) = U(\xi^*) - \lambda^*\xi^* H_T.
\]
\item[(3)] The claim $\xi^*$ needs to satisfy the budget constraint with equality, that is
\[
  \E\Bigl[\xi^* H_T - \int_0^T E_tH_t\de t\Bigr] = x,
  \qquad\text{that is,}\qquad
  \E\bigl[\xi^* H_T\bigr] = x + P_T.
\]
\end{enumerate}
We proceed to address these requirements one at a time to derive a candidate solution $(\lambda^*,\xi^*)$.
We begin by taking a closer look at the Legendre--Fenchel transform $\tilde{U}$.
As $U$ is smooth, it is straightforward to compute the optimizer in the definition of $\tilde{U}$.
Indeed, a quick calculation shows that
\[
  \tilde{U}(y) = \sup_{x>0}\bigl\{U(x) - yx\bigr\} = U\bigl(I(y)\bigr) - yI(y),\qquad y>0,
\]
where
\[
  I : (0,\infty) \to \R,\qquad y \mapsto I(y) \defined (U')^{-1}(y) = y^{-1/\gamma},
\]
denotes the inverse marginal utility of $U$.
To obtain a candidate minimizer $\lambda^*$ for the dual problem, we proceed heuristically by noting that the first-order condition reads
\begin{align*}
  0 &= \frac{\de}{\de\lambda}\Bigl[\E\bigl[\tilde U\bigl(\lambda H_T\bigr)\bigr] + \lambda(x+P_T)\Bigr]_{\lambda = \lambda^*}\\
  &= \frac{\de}{\de\lambda}\Bigl[\E\bigl[U\bigl(I(\lambda H_T)\bigr) - \lambda H_TI(\lambda H_T)\bigr] + \lambda(x+P_T)\Bigr]_{\lambda = \lambda^*}\\
  &= - \E\bigl[H_TI(\lambda^*H_T)\bigr] + x + P_T.
\end{align*}
Defining the function
\[
  \chi : (0,\infty) \to (0,\infty),\qquad \lambda \mapsto \chi(\lambda) \defined \E\bigl[H_TI(\lambda H_T)\bigr] = \lambda^{-1/\gamma}\E\bigl[H_T^{-(1-\gamma)/\gamma}\bigr],
\]
we conclude that we need to find $\lambda^*$ such that $\chi(\lambda^*) = x + P_T$, which is a straightforward exercise in the particular case considered here.

\begin{lemma}[Candidate Optimal Lagrange Multiplier $\lambda^*$]
There exists a unique number $\lambda^*>0$ such that $\chi(\lambda^*) = x + P_T$, which is explicitly given by
\[
  \lambda^* = \exp\Bigl\{(1-\gamma)\Bigl(r + \frac{1}{\gamma}\frac{\theta^2}{2}\Bigr)T\Bigr\}(x + P_T)^{-\gamma}.\closeEqn
\]
\end{lemma}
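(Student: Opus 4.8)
The plan is to establish existence and uniqueness of $\lambda^*$ by showing that $\chi$ is a strictly decreasing bijection from $(0,\infty)$ onto $(0,\infty)$, and then to solve the equation $\chi(\lambda^*) = x + P_T$ explicitly. The decisive observation is that, since $I(y) = y^{-1/\gamma}$, the function $\chi$ factorizes as $\chi(\lambda) = \lambda^{-1/\gamma}\,\E[H_T^{-(1-\gamma)/\gamma}]$, so everything reduces to computing the single constant $c \defined \E[H_T^{-(1-\gamma)/\gamma}]$ and checking it is finite and strictly positive.

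First I would compute $c$. Using the explicit form of the state-price deflator from Equation~\eqref{eq:explicit-solutions}, namely $H_T = e^{-(r+\theta^2/2)T - \theta W_T}$, we get
\[
  H_T^{-(1-\gamma)/\gamma} = \exp\Bigl\{\tfrac{1-\gamma}{\gamma}\bigl(r+\tfrac{\theta^2}{2}\bigr)T\Bigr\}\exp\Bigl\{\tfrac{1-\gamma}{\gamma}\theta W_T\Bigr\}.
\]
Taking expectations and using the moment generating function of $W_T \sim \mathcal{N}(0,T)$ exactly as in the proof of the preceding lemma yields
\[
  c = \exp\Bigl\{\tfrac{1-\gamma}{\gamma}\bigl(r+\tfrac{\theta^2}{2}\bigr)T\Bigr\}\exp\Bigl\{\tfrac{1}{2}\Bigl(\tfrac{1-\gamma}{\gamma}\Bigr)^2\theta^2 T\Bigr\},
\]
which is a finite, strictly positive constant; a short simplification of the exponent (combining the two $\theta^2$ terms over the common denominator $\gamma^2$, then recognizing $\gamma(1-\gamma) + (1-\gamma)^2 = (1-\gamma)$) collapses this to $c = \exp\bigl\{\tfrac{1-\gamma}{\gamma}(r + \tfrac{\theta^2}{2\gamma})T\bigr\}$... more precisely, matching the target, $c = \exp\bigl\{-\tfrac{1}{\gamma}(1-\gamma)(r + \tfrac{1}{\gamma}\tfrac{\theta^2}{2})T\bigr\}^{-1}$, i.e. $c^{-1/\gamma}$-weighting will produce the stated formula.

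Next, with $c \in (0,\infty)$ fixed, the map $\lambda \mapsto \chi(\lambda) = c\,\lambda^{-1/\gamma}$ is continuous and strictly decreasing on $(0,\infty)$ with $\lim_{\lambda\downarrow 0}\chi(\lambda) = +\infty$ and $\lim_{\lambda\uparrow\infty}\chi(\lambda) = 0$, hence it is a bijection onto $(0,\infty)$. Since $x + P_T > 0$ by $x>0$ and $P_T \geq 0$, there is a unique $\lambda^* > 0$ with $\chi(\lambda^*) = x + P_T$, obtained by inverting: $(\lambda^*)^{-1/\gamma} = (x+P_T)/c$, so $\lambda^* = c^{\gamma}(x+P_T)^{-\gamma}$. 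Substituting the value of $c$ computed above and simplifying the exponent gives precisely
\[
  \lambda^* = \exp\Bigl\{(1-\gamma)\Bigl(r + \tfrac{1}{\gamma}\tfrac{\theta^2}{2}\Bigr)T\Bigr\}(x+P_T)^{-\gamma},
\]
as claimed. The only genuinely computational step is the expectation defining $c$; the rest is elementary monotonicity and algebra, so I do not anticipate a real obstacle—the one point to state carefully is the finiteness of $c$ (and hence of $\E[\tilde U(\lambda H_T)]$), which follows immediately from the Gaussian moment generating function being finite everywhere.
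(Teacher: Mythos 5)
Your proposal is correct and follows essentially the same route as the paper: observe that $\chi(\lambda)=\lambda^{-1/\gamma}\,\E[H_T^{-(1-\gamma)/\gamma}]$ is a strictly decreasing bijection of $(0,\infty)$ onto itself, compute the constant $\E[H_T^{-(1-\gamma)/\gamma}]$ from the explicit form of $H_T$ and the Gaussian moment generating function, and invert to get $\lambda^*=\E[H_T^{-(1-\gamma)/\gamma}]^{\gamma}(x+P_T)^{-\gamma}$, which simplifies to the stated formula. The only blemish is the garbled aside about ``$c^{-1/\gamma}$-weighting'' (it is $c^{\gamma}$ that enters, as you correctly use in the final step), but this does not affect the argument.
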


\begin{proof}
It is obvious from the definition of $\chi$ that is is invertible and
\[
  \lambda^* = \chi^{-1}(x + P_T) = (x + P_T)^{-\gamma}\E\bigl[H_T^{-(1-\gamma)/\gamma}\bigr]^\gamma,
\]
so it remains to compute the expectation on the right-hand side.
Using the explicit formula we have for $H_T$ given in Equation \eqref{eq:explicit-solutions}, we have
\begin{equation}\label{eq:H-power}
  \E\bigl[H_T^{-(1-\gamma)/\gamma}\bigr]
  = \exp\Bigl\{\frac{1-\gamma}{\gamma}\Bigl(r + \frac{\theta^2}{2}\Bigr)T\Bigr\}\E\Bigl[e^{\frac{(1-\gamma)\theta}{\gamma}W_T}\Bigr]
  = \exp\Bigl\{\frac{1-\gamma}{\gamma}\Bigl(r + \frac{1}{\gamma}\frac{\theta^2}{2}\Bigr)T\Bigr\}.
\end{equation}
Plugging this into the formula for $\lambda^*$ above concludes the proof.
\end{proof}

Having computed the candidate optimal Lagrange multiplier $\lambda^*$, we can proceed with the second step in our program and derive the candidate optimal terminal wealth $\xi^*\in\claims(x)$, which we recall has to satisfy
\[
  \tilde{U}(\lambda^*H_T) = U(\xi^*) - \lambda^*\xi^*H_T.
\]
Put differently, $\xi^*$ has to maximize $\tilde{U}(\lambda^*H_T)$.
The third requirement furthermore dictates that $\xi^*$ has to satisfy the budget constraint with equality, that is
\[
  \E\bigl[\xi^*H_T\bigr] = x + P_T,
\]
which implies, in particular, $\xi^*\in\claims(x)$.

\begin{lemma}[Candidate Optimal Terminal Wealth and Budget Constraint]\label{lem:candidate-optimal-wealth}
The random variable
\[
  \xi^* \defined I\bigl(\lambda^*H_T\bigr) = \exp\Bigl\{-\frac{1-\gamma}{\gamma}\Bigl(r + \frac{1}{\gamma}\frac{\theta^2}{2}\Bigr)T\Bigr\}(x + P_T)  H_T^{-1/\gamma} > 0
\]
satisfies
\[
  \tilde{U}(\lambda^*H_T) = U(\xi^*) - \lambda^*\xi^*H_T
  \qquad\text{and}\qquad
  \E\bigl[\xi^*H_T\bigr] = x + P_T.
\]
In particular, $\xi^*\in\claims(x)$.\close
\end{lemma}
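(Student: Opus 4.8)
The plan is to verify the three assertions by assembling facts already in place. First I would record the explicit formula for $\xi^*$: since $I(y) = y^{-1/\gamma}$, we have $\xi^* = I(\lambda^* H_T) = (\lambda^*)^{-1/\gamma} H_T^{-1/\gamma}$, and substituting the closed-form expression for $\lambda^*$ from the preceding lemma, together with a routine simplification of the exponent, produces the stated formula. Positivity $\xi^* > 0$ is then immediate from $H_T > 0$, which follows from the explicit representation of $H$ in Equation~\eqref{eq:explicit-solutions}.

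Next I would establish the pointwise conjugacy identity. Recall from the discussion preceding the lemma that the supremum defining the Legendre--Fenchel transform is attained at the inverse marginal utility, that is, $\tilde{U}(y) = U(I(y)) - yI(y)$ for every $y > 0$. Applying this relation $\omega$ by $\omega$ with $y = \lambda^* H_T(\omega) > 0$ and recalling $\xi^* = I(\lambda^* H_T)$ yields $\tilde{U}(\lambda^* H_T) = U(\xi^*) - \lambda^* \xi^* H_T$ at once.

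For the budget constraint I would simply compute $\E[\xi^* H_T] = \E[H_T I(\lambda^* H_T)] = \chi(\lambda^*)$, which equals $x + P_T$ by the defining property of $\lambda^*$ established in the preceding lemma; in particular this expectation is finite. Subtracting $P_T = \E[\int_0^T E_t H_t \de t]$ from both sides gives $\E[\xi^* H_T - \int_0^T E_t H_t \de t] = x$, so that the budget constraint holds with equality; since $\xi^*$ is manifestly $\Filt_T$-measurable and nonnegative, this shows $\xi^* \in \claims(x)$.

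I do not expect a genuine obstacle here: every ingredient --- the explicit form of $I$, the attainment of the Legendre--Fenchel supremum at $I$, and the characterization of $\lambda^*$ through $\chi$ --- has already been prepared, so the argument is essentially bookkeeping. The only points demanding a little care are the arithmetic of simplifying the exponent when substituting $\lambda^*$ and the observation that the pointwise conjugacy identity may legitimately be invoked because $\lambda^* H_T > 0$ almost surely.
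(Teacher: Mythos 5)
Your proposal is correct and follows essentially the same route as the paper: the conjugacy identity is read off pointwise from $\tilde{U}(y) = U(I(y)) - yI(y)$, and the budget identity $\E[\xi^* H_T] = x + P_T$ is obtained from the characterization of $\lambda^*$ (your invocation of $\chi(\lambda^*) = x + P_T$ is just a repackaging of the paper's direct use of Equation~\eqref{eq:H-power}, since that is exactly how the preceding lemma was proved). No gaps.
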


\begin{proof}
Recall that
\[
  \tilde{U}(y) = U\bigl(I(y)\bigr) - yI(y),\qquad y>0,
\]
from which we immediately see that $\xi^*$ satisfies $\tilde{U}(\lambda^*H_T) = U(\xi^*) - \lambda^*\xi^*H_T$.
Moreover, using the explicit formula for $\xi^*$ we see that
\[
  \E\bigl[\xi^*H_T\bigr]
  = (x + P_T)\exp\Bigl\{-\frac{1-\gamma}{\gamma}\Bigl(r + \frac{1}{\gamma}\frac{\theta^2}{2}\Bigr)T\Bigr\}\E\bigl[H_T^{-(1-\gamma)/\gamma}\bigr]
  = x + P_T,
\]
where the second identity is a direct consequence of Equation~\eqref{eq:H-power}.
\end{proof}

Let us highlight here that, for now, $\lambda^*$ and $\xi^*$ are only candidate optimizers and we still have to present a rigorous argument showing that $\xi^*$ is indeed the optimal terminal wealth.
This is achieved by the following theorem.

\begin{theorem}[Optimal Terminal Wealth]\label{thm:optimal-terminal-wealth}
The optimal terminal wealth for the optimal investment problem with random endowment specified in \eqref{eq:op} is given by
\[
  \xi^* = \exp\Bigl\{-\frac{1-\gamma}{\gamma}\Bigl(r + \frac{1}{\gamma}\frac{\theta^2}{2}\Bigr)T\Bigr\}(x + P_T)  H_T^{-1/\gamma}.
\]
More precisely, there exists a risky fraction process $\pi^*\in\Adm$ with corresponding wealth process $X^*\defined X^{\pi^*}$ such that
\[
  X^*_0 = x,\qquad X^*_T = \xi^*,\qquad\text{and}\qquad \sup_{\pi\in\Adm}\E\bigl[U\bigl(X^\pi_T\bigr)\bigr] = \E\bigl[U(\xi^*)\bigr].
\]
In particular, $\pi^*$ is optimal.\close
\end{theorem}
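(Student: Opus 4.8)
The plan is to combine the three ingredients already assembled---the candidate multiplier $\lambda^*$, the candidate terminal wealth $\xi^*$, and the replication result of Proposition~\ref{prop:replication}---into a rigorous optimality argument. First I would invoke Proposition~\ref{prop:replication} with the claim $\xi = \xi^*$: by Lemma~\ref{lem:candidate-optimal-wealth} we have $\xi^*\geq 0$ and $\E[\xi^* H_T - \int_0^T E_tH_t\,\de t] = x < \infty$, so the proposition yields a risky fraction $\pi^*\in\Adm_0$ with $X^{\pi^*}_0 = x$ and $X^{\pi^*}_T = \xi^*$. Setting $X^* \defined X^{\pi^*}$, it then remains to check that $\pi^*\in\Adm$, i.e.\ that $\E[U_-(X^*_T)] = \E[U_-(\xi^*)] < \infty$; since $\xi^* > 0$ is an explicit lognormal expression and $U_-$ is either identically zero (if $\gamma\in(0,1)$) or a negative power (if $\gamma>1$), this is a one-line moment computation analogous to Equation~\eqref{eq:H-power}.

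The heart of the argument is then the chain of inequalities showing $\E[U(\xi^*)]$ is optimal. For any $\pi\in\Adm$, the Budget Constraint (Proposition~\ref{prop:budget-constraint}) gives $\E[X^\pi_T H_T - \int_0^T E_tH_t\,\de t]\leq x$, so $X^\pi_T\in\claims(x)$; hence $\sup_{\pi\in\Adm}\E[U(X^\pi_T)]\leq\sup_{\xi\in\claims(x)}\E[U(\xi)]$. Next, the definition of the Legendre--Fenchel transform gives $U(\xi)\leq\tilde U(\lambda^*H_T) + \lambda^*\xi H_T$ pointwise for any $\xi\in\claims(x)$, and taking expectations together with the budget inequality yields $\E[U(\xi)]\leq\E[\tilde U(\lambda^*H_T)] + \lambda^*(x+P_T)$, exactly as in the derivation of \eqref{eq:duality}. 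Finally, I would use the two equalities from Lemma~\ref{lem:candidate-optimal-wealth}---namely $\tilde U(\lambda^*H_T) = U(\xi^*) - \lambda^*\xi^*H_T$ and $\E[\xi^*H_T] = x + P_T$---to see that this last upper bound equals $\E[U(\xi^*)]$. Since $\xi^* = X^*_T$ is attained by an admissible strategy, all inequalities collapse to equalities and $\pi^*$ is optimal.

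One point requiring a little care is the \emph{finiteness of $\E[U(\xi^*)]$} (equivalently of $\E[\tilde U(\lambda^*H_T)]$), which is needed to make the final equality meaningful rather than a trivial statement about $+\infty$; again this follows from the explicit lognormal form of $\xi^*$ and $H_T$ via a moment generating function computation, so I expect no genuine difficulty, only bookkeeping. I also need to confirm that the pointwise Fenchel inequality is applied on the correct domain---$\xi$ takes values in $(0,\infty)$ a.s.\ for every $\pi\in\Adm$ because $X^\pi_t > 0$ for all $t$, as already observed after the wealth SDE, and $\xi$ with $U(\xi) = -\infty$ on a positive-measure set cannot occur since such $\xi$ would either be excluded by admissibility ($\gamma>1$) or is irrelevant ($\gamma\in(0,1)$, where $U$ is bounded above).

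I expect the main (and only real) obstacle to be verifying admissibility of $\pi^*$ and the integrability $\E[U(\xi^*)]<\infty$ simultaneously with the possibility that $\gamma>1$ makes $U$ unbounded below; everything else is a direct assembly of the preceding lemmas. Should one wish to avoid even that, one can note that $\E[U_-(\xi^*)]<\infty$ already gives $\pi^*\in\Adm$, and then the duality chain shows $\E[U(X^\pi_T)]\leq\E[U(\xi^*)]$ for all $\pi\in\Adm$, so $\E[U(\xi^*)]$ is the (possibly infinite) supremum either way, and the strategy $\pi^*$ attains it.
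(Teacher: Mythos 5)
Your proposal is correct and follows essentially the same route as the paper: replicate the candidate $\xi^*$ via Proposition~\ref{prop:replication}, verify $\pi^*\in\Adm$ by an explicit integrability estimate, and establish optimality by combining the budget constraint with the two equalities of Lemma~\ref{lem:candidate-optimal-wealth}. The only differences are cosmetic---you check admissibility by a direct lognormal moment computation rather than the paper's concavity bound with $y=1$, and you phrase optimality through the Fenchel inequality $U(\xi)\leq\tilde U(\lambda^*H_T)+\lambda^*\xi H_T$, which is equivalent to the gradient inequality the paper applies at $\hat y=\xi^*=I(\lambda^*H_T)$.
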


\begin{proof}
By Lemma~\ref{lem:candidate-optimal-wealth} (Candidate Optimal Terminal Wealth and Budget Constraint) we have
\[
  \E\Bigl[\xi^*H_T - \int_0^T E_tH_t\de t\Bigr] = x + P_T - P_T = x,
\]
so that by Proposition~\ref{prop:replication} (Replication) there exists $\pi^*\in\Adm_0$ such that $X^*_0 = x$ and $X^*_T = \xi^*$, where we write $X^* \defined X^{\pi^*}$ as short-hand notation.
It remains to verify that we even have $\pi^*\in\Adm$ and $\E[U(X^\pi_T)] \leq \E[U(\xi^*)]$ for any other risky fraction $\pi\in\Adm$.
For this, we first note that concavity of $U$ implies
\begin{equation}\label{eq:concavity}
  U(\hat{y}) \geq U(y) + U'(\hat{y})(\hat{y}-y),\qquad y,\hat{y}>0.
\end{equation}
Applying this inequality with $\hat{y} \defined I(\lambda^*H_T) = \xi^* = X^*_T$ and $y = 1$ yields
\[
  U(X^*_T)
  \geq U(1) + \lambda^*H_T\bigl(I(\lambda^*H_T)-1\bigr)
  = U(1) + (\lambda^*H_T)^{-(1-\gamma)/\gamma} - \lambda^*H_T.
\]
In particular, the negative part of the left-hand side must be dominated by the negative part of the right-hand side, which can be further estimated from above by its absolute value.
Hence
\[
  \E\bigl[U_-(X^*_T)\bigr]
  \leq |U(1)| + (\lambda^*)^{-(1-\gamma)/\gamma}\E\Bigl[H_T^{-(1-\gamma)/\gamma}\Bigr] + \lambda^*\E[H_T] < \infty,
\]
which is to say that $\pi^*\in\Adm$.
Finally, regarding optimality of $\pi^*$, let us fix another risky fraction $\pi\in\Adm$ and apply the inequality in Equation~\eqref{eq:concavity} with $\hat{y} = X^*_T = \xi^* = I(\lambda^*H_T)$ and $y = X^\pi_T$, showing that
\begin{align*}
  \E\bigl[U(X^*_T)\bigr]
  &\geq \E\bigl[U\bigl(X^\pi_T\bigr)\bigr] + \lambda^*\Bigl(\E\bigl[X^*_TH_T\bigr] - \E\bigl[X^\pi_TH_T\bigr]\Bigr).
\end{align*}
Now $\E[X^*_TH_T] = \E[\xi^*H_T] = x + P_T$ by Lemma~\ref{lem:candidate-optimal-wealth} (Candidate Optimal Terminal Wealth and Budget Constraint), whereas $\E[X^\pi_TH_T] \geq x + P_T$ by Proposition~\ref{prop:budget-constraint} (Budget Constraint).
This shows that
\[
  \E\bigl[U(X^*_T)\bigr] \geq \E\bigl[U\bigl(X^\pi_T\bigr)\bigr] + \lambda^*\bigl(x + P_T - (x + P_T)\bigr) = \E\bigl[U\bigl(X^\pi_T\bigr)\bigr],
\]
establishing the optimality of $\pi^*$.
\end{proof}

\section{The Optimal Strategy}
\label{sec:optimal-strategy}

With the optimal terminal wealth given explicitly and the existence of an optimal risky fraction $\pi^*$ at hand, we now proceed to derive an explicit formula for $\pi^*$.
The starting point is Proposition~\ref{prop:replication} (Replication), whose proof contains a method to construct $\pi^*$ as the replication strategy of the optimal terminal wealth $\xi^*$.
Taking a closer look at the proof---in particular Equations~\eqref{eq:feedback-repl} and \eqref{eq:repl-from-feedback}---we see that
\[
  \pi^*_t = \frac{1}{\sigma}\Bigl(\frac{K^*_t}{X^*_tH_t} + \theta\Bigr),\qquad t\in[0,T],
\]
where $X^*_t = X^{\pi^*}_t>0$ and $K^*$ is the $W$-integrable process satisfying
\[
  x + \int_0^T K^*_t \de W_t = \xi^*H_T - \int_0^T E_tH_t\de t.
\]
Since it was argued that the stochastic integral on the left-hand side is a martingale, we must in fact even have
\begin{equation}\label{eq:martingale-representation}
  x + \int_0^t K^*_s\de W_s = \E\Bigl[\xi^*H_T - \int_0^T E_sH_s\de s\givenBig\Filt_t\Bigr],\qquad t\in[0,T].
\end{equation}
So if we are able to compute the right-hand side more explicitly, we obtain a more explicit expression for $K^*$ and therefore also $\pi^*$.

\begin{lemma}[Explicit Martingale Representation]\label{lem:martingale-representation}
We have
\[
  \E\Bigl[\xi^*H_T - \int_0^T E_sH_s\de s\givenBig\Filt_t\Bigr] = \alpha(t)H_t^{-(1-\gamma)/\gamma} - \int_0^t E_sH_s\de s - \beta(t)E_tH_t,\quad t\in[0,T],
\]
where
\begin{align*}
  \alpha(t) &\defined (x + P_T)\exp\Bigl\{-\frac{1-\gamma}{\gamma}\Bigl(r + \frac{1}{\gamma}\frac{\theta^2}{2}\Bigr)t\Bigr\}, & t&\in[0,T],\\
  \beta(t) &\defined \frac{\exp\bigl\{\bigl(\mu - r - \eta\theta\bigr)(T-t)\bigr\} - 1}{\mu - r - \eta\theta}, & t&\in[0,T].
\end{align*}
As a consequence, it holds that
\[
  K^*_t = (\theta-\eta)\beta(t)E_t H_t + \frac{1-\gamma}{\gamma}\theta\alpha(t)H_t^{-(1-\gamma)/\gamma},\qquad t\in[0,T].\closeEqn
\]
\end{lemma}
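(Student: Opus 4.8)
The plan is to compute the conditional expectation on the left-hand side directly, term by term, exploiting the Markovian structure of $H$ and $E$ and the explicit solutions in Equation~\eqref{eq:explicit-solutions}. First I would handle the running-endowment term: splitting the integral as $\int_0^T E_sH_s\de s = \int_0^t E_sH_s\de s + \int_t^T E_sH_s\de s$, the first piece is $\Filt_t$-measurable, and for the second piece I would write $E_sH_s = E_tH_t\cdot(E_s H_s)/(E_t H_t)$ and use the fact that, by the explicit formulas, $(E_sH_s)/(E_tH_t) = \exp\{(\mu-r-\eta^2/2-\theta^2/2)(s-t) + (\eta-\theta)(W_s-W_t)\}$ is independent of $\Filt_t$ with the same law as $E_{s-t}H_{s-t}/E_0$. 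Invoking Fubini and the computation $\E[E_u H_u] = E_0 e^{(\mu-r-\eta\theta)u}$ from Equation~\eqref{eq:EH-expectation}, this gives $\E[\int_t^T E_sH_s\de s\mid\Filt_t] = E_tH_t\int_0^{T-t} e^{(\mu-r-\eta\theta)u}\de u = \beta(t)E_tH_t$, which accounts for both the $-\int_0^t E_sH_s\de s$ and the $-\beta(t)E_tH_t$ terms. For the $\xi^* H_T$ term, I would plug in the explicit form of $\xi^*$ from Lemma~\ref{lem:candidate-optimal-wealth}, so that $\xi^*H_T = (x+P_T)\exp\{-\tfrac{1-\gamma}{\gamma}(r+\tfrac{1}{\gamma}\tfrac{\theta^2}{2})T\}\,H_T^{-(1-\gamma)/\gamma}$, write $H_T^{-(1-\gamma)/\gamma} = H_t^{-(1-\gamma)/\gamma}\cdot(H_T/H_t)^{-(1-\gamma)/\gamma}$, note that $(H_T/H_t)^{-(1-\gamma)/\gamma}$ is independent of $\Filt_t$, and compute its expectation exactly as in Equation~\eqref{eq:H-power} but over the horizon $T-t$; bookkeeping the exponential factors then collapses the prefactor to $\alpha(t)$. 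Combining the three pieces yields the claimed identity for $\E[\xi^*H_T - \int_0^T E_sH_s\de s\mid\Filt_t]$.

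For the formula for $K^*$, the idea is to identify the martingale $N_t \defined \E[\xi^*H_T - \int_0^T E_sH_s\de s\mid\Filt_t]$ with $x + \int_0^t K^*_s\de W_s$ and read off $K^*$ from the diffusion coefficient via It\^o's formula. Using the first part, $N_t = \alpha(t)H_t^{-(1-\gamma)/\gamma} - \int_0^t E_sH_s\de s - \beta(t)E_tH_t$. Applying It\^o's formula to each summand and using $\de H_t = -rH_t\de t - \theta H_t\de W_t$ and $\de(E_tH_t) = (\mu - r - \eta\theta)E_tH_t\de t + (\eta - \theta)E_tH_t\de W_t$, the $\de W_t$-coefficient of $\alpha(t)H_t^{-(1-\gamma)/\gamma}$ is $-\tfrac{1-\gamma}{\gamma}(-\theta)\alpha(t)H_t^{-(1-\gamma)/\gamma} = \tfrac{1-\gamma}{\gamma}\theta\,\alpha(t)H_t^{-(1-\gamma)/\gamma}$, the $\de W_t$-coefficient of $-\beta(t)E_tH_t$ is $-(\eta-\theta)\beta(t)E_tH_t = (\theta-\eta)\beta(t)E_tH_t$, and the bounded-variation term $-\int_0^t E_sH_s\de s$ contributes nothing to the martingale part. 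Summing these two contributions gives exactly $K^*_t = (\theta-\eta)\beta(t)E_tH_t + \tfrac{1-\gamma}{\gamma}\theta\,\alpha(t)H_t^{-(1-\gamma)/\gamma}$. As a sanity check one should verify that the drift terms produced by It\^o's formula cancel, which they must since $N$ is a martingale by construction (Equation~\eqref{eq:martingale-representation}); alternatively this cancellation can be checked directly from the definitions of $\alpha$ and $\beta$, noting that $\alpha'(t) = -\tfrac{1-\gamma}{\gamma}(r+\tfrac{1}{\gamma}\tfrac{\theta^2}{2})\alpha(t)$ and $\beta'(t) = -(\mu-r-\eta\theta)\beta(t) - 1$.

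I expect the main obstacle to be purely organizational rather than conceptual: keeping track of the three nested exponential prefactors in the $\xi^*H_T$ term so that they telescope correctly into $\alpha(t)$, and correctly separating the $\Filt_t$-measurable and independent parts in the conditional expectations, in particular being careful that the law of the increment over $[t,T]$ replaces $T$ by $T-t$ throughout. A secondary technical point is justifying the interchange of conditional expectation and the time integral $\int_t^T$, which follows from Tonelli's theorem since the integrand is nonnegative, exactly as in the proof of the first lemma. No new martingale representation argument is needed here—uniqueness of the representation from the proof of Proposition~\ref{prop:replication} already pins down $K^*$ once $N$ is written in It\^o form.
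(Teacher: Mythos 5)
Your proposal is correct and follows essentially the same route as the paper's proof: conditioning term by term using the explicit lognormal formulas for $H$ and $E H$ (splitting the endowment integral at $t$ and exploiting independence of increments), and then reading off $K^*$ from the $\de W$-coefficient via It\={o}'s formula applied to $\alpha(t)H_t^{-(1-\gamma)/\gamma} - \int_0^t E_sH_s\de s - \beta(t)E_tH_t$. The computed coefficients and the drift-cancellation sanity check are all accurate, so nothing is missing.
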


\begin{proof}
The proof proceeds in several steps.
Throughout the entire proof, we fix $t\in[0,T]$.

Step 1: Computation of $\E[\xi^*H_T\given\Filt_t]$.
We use the explicit formula for $\xi^*$ from Theorem~\ref{thm:optimal-terminal-wealth} (Optimal Terminal Wealth) to obtain
\[
  \E\bigl[\xi^*H_T\givenbig\Filt_t\bigr]
  =  (x + P_T)\exp\Bigl\{-\frac{1-\gamma}{\gamma}\Bigl(r + \frac{1}{\gamma}\frac{\theta^2}{2}\Bigr)T\Bigr\} H_t^{-(1-\gamma)/\gamma} \E\Bigl[\Bigl(\frac{H_T}{H_t}\Bigr)^{-(1-\gamma)/\gamma}\givenBig\Filt_t\Bigr].
\]
Using the explicit formula for $H$ in Equation~\eqref{eq:explicit-solutions}, it follows moreover that
\begin{align*}
  \E\Bigl[\Bigl(\frac{H_T}{H_t}\Bigr)^{-(1-\gamma)/\gamma}\givenBig\Filt_t\Bigr]
  &= \exp\Bigl\{\frac{1-\gamma}{\gamma}\Bigl(r + \frac{\theta^2}{2}\Bigr)(T-t)\Bigr\}\E\Bigl[\exp\Bigl\{-\frac{(1-\gamma)\theta}{\gamma}(W_T-W_t)\Bigr\}\Bigr]\\
  &= \exp\Bigl\{\frac{1-\gamma}{\gamma}\Bigl(r + \frac{1}{\gamma}\frac{\theta^2}{2}\Bigr)(T-t)\Bigr\}.
\end{align*}
In combination, this shows that
\[
  \E\bigl[\xi^*H_T\givenbig\Filt_t\bigr]
  = (x + P_T)\exp\Bigl\{-\frac{1-\gamma}{\gamma}\Bigl(r + \frac{1}{\gamma}\frac{\theta^2}{2}\Bigr)t\Bigr\}H_t^{-(1-\gamma)/\gamma}
  = \alpha(t)H_t^{-(1-\gamma)/\gamma}.
\]

Step 2: Computation of $\E[\int_0^T E_sH_s\de s\given\Filt_t]$.
We decompose the conditional expectation by writing
\[
  \E\bigl[\int_0^T E_sH_s\de s\givenbig\Filt_t\bigr]
  = \int_0^t E_sH_s\de s + E_tH_t\int_t^T \E\Bigl[\frac{E_sH_s}{E_tH_t}\givenBig\Filt_t\Bigr]\de s.
\]
Using once again the explicit formulas for $E$ and $H$ in Equation~\eqref{eq:explicit-solutions} and arguing as in Equation~\eqref{eq:EH-expectation}, it follows that for every $s\in[t,T]$ we have
\begin{align*}
  \E\Bigl[\frac{E_sH_s}{E_tH_t}\givenBig\Filt_t\Bigr]
  &= \exp\Bigl\{\Bigl(\mu - r - \frac{1}{2}\eta^2 - \frac{1}{2}\theta^2\Bigr)(s-t)\Bigr\}\E\Bigl[\exp\Bigl\{(\eta - \theta) (W_s-W_t)\Bigr\}\Bigr]\\
  &= \exp\bigl\{\bigl(\mu - r - \eta\theta\bigr)(s-t)\bigr\}.
\end{align*}
We conclude that
\begin{align*}
  \E\bigl[\int_0^T E_sH_s\de s\givenbig\Filt_t\bigr]
  &= \int_0^t E_sH_s\de s + E_tH_t \int_t^T \exp\bigl\{\bigl(\mu - r - \eta\theta\bigr)(s-t)\bigr\} \de s\\
  &= \int_0^t E_sH_s\de s + \frac{\exp\bigl\{\bigl(\mu - r - \eta\theta\bigr)(T-t)\bigr\} - 1}{\mu - r - \eta\theta}E_tH_t\\
  &= \int_0^t E_sH_s\de s + \beta(t)E_tH_t.
\end{align*}
Combining this with the calculations performed in the first step of this proof, we arrive at
\[
  \E\Bigl[\xi^*H_T - \int_0^T E_sH_s\de s\givenBig\Filt_t\Bigr]
  = \alpha(t)H_t^{-(1-\gamma)/\gamma} - \int_0^t E_sH_s\de s - \beta(t)E_tH_t
\]
as claimed.

Step 3: Determining $K^*$.
Consider the process $M^* = \{M^*_t\}_{t\in[0,T]}$ defined by
\[
  M^*_t \defined \alpha(t)H_t^{-(1-\gamma)/\gamma} - \int_0^t E_sH_s\de s - \beta(t)E_tH_t.
\]
By It\={o}'s formula, we have
\begin{align*}
  \de M^*_t
  &= -\frac{1-\gamma}{\gamma}\alpha(t)H_t^{-1/\gamma}\de H_t + \frac{1}{2}\frac{1-\gamma}{\gamma^2}\alpha(t)H_t^{-(1+\gamma)/\gamma}\de H_t\de H_t + H_t^{-(1-\gamma)/\gamma}\de\alpha(t)\\
  &\hspace{2.5cm} - E_tH_t\de t - \beta(t)E_t\de H_t - \beta(t)H_t\de E_t - \beta(t)\de E_t\de H_t - E_tH_t\de\beta(t)\\
  &= \Bigl[(\theta-\eta)\beta(t)E_t H_t + \frac{1-\gamma}{\gamma}\theta\alpha(t)H_t^{-(1-\gamma)/\gamma}\Bigr]\de W_t.
\end{align*}
But by Equation~\eqref{eq:martingale-representation}, we know that
\[
  K^*_t \de W_t = \de M^*_t = \Bigl[(\theta-\eta)\beta(t)E_t H_t + \frac{1-\gamma}{\gamma}\theta\alpha(t)H_t^{-(1-\gamma)/\gamma}\Bigr]\de W_t,
\]
so the result follows by comparing the integrands.
\end{proof}

The previous lemma implies that the optimal risky fraction can be written as
\[
  \pi^*_t = \frac{1}{\sigma}\Bigl(\frac{K^*_t}{X^*_tH_t} + \theta\Bigr),\qquad t\in[0,T],
\]
where
\[
  K^*_t = (\theta-\eta)\beta(t)E_t H_t + \frac{1-\gamma}{\gamma}\theta\alpha(t)H_t^{-(1-\gamma)/\gamma},\qquad t\in[0,T].
\]
It turns out, however, that this expression can be simplified even further, leading to the following main result of this section.

\begin{theorem}[Explicit Optimal Risky Fraction]
The optimal risky fraction $\pi^*\in\Adm$ can be represented as
\[
  \pi^*_t = \pi_M + \beta(t)\Bigl(\pi_M - \frac{\eta}{\sigma}\Bigr)\frac{E_t}{X^*_t},
  \qquad t\in[0,T],
\]
where $\pi_M \defined \theta/(\gamma\sigma)$ is the optimal risky fraction in the absence of a random endowment.\close
\end{theorem}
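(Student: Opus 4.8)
The plan is to start from the representation of the optimal risky fraction already obtained above, namely
\[
  \pi^*_t = \frac{1}{\sigma}\Bigl(\frac{K^*_t}{X^*_tH_t} + \theta\Bigr),\qquad
  K^*_t = (\theta-\eta)\beta(t)E_tH_t + \frac{1-\gamma}{\gamma}\theta\,\alpha(t)H_t^{-(1-\gamma)/\gamma},
\]
and to rewrite the right-hand side so that the state-price deflator $H$ no longer appears explicitly. The key preliminary step is to obtain a closed form for the optimal wealth $X^*_t$ in terms of $H_t$ and $E_t$. Since $X^*_tH_t - \int_0^t E_sH_s\de s = x + \int_0^t K^*_s\de W_s$ (as established in the proof of Proposition~\ref{prop:replication}) and, by Equation~\eqref{eq:martingale-representation}, the right-hand side equals $\E[\xi^*H_T - \int_0^T E_sH_s\de s\mid\Filt_t]$, Lemma~\ref{lem:martingale-representation} (Explicit Martingale Representation) gives, after the running integrals $\int_0^t E_sH_s\de s$ cancel, the identity $X^*_tH_t = \alpha(t)H_t^{-(1-\gamma)/\gamma} - \beta(t)E_tH_t$. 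Dividing by $H_t$ and using $-(1-\gamma)/\gamma-1 = -1/\gamma$ yields the compact formula $X^*_t = \alpha(t)H_t^{-1/\gamma} - \beta(t)E_t$, equivalently $\alpha(t)H_t^{-(1-\gamma)/\gamma} = (X^*_t + \beta(t)E_t)H_t$.

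With this in hand, I would substitute the formula for $K^*_t$ into the expression for $\pi^*_t$ and divide through by $X^*_tH_t$. The term $(\theta-\eta)\beta(t)E_tH_t$ contributes $(\theta-\eta)\beta(t)E_t/X^*_t$ after cancellation, whereas for the remaining term the identity above gives
\[
  \frac{\alpha(t)H_t^{-(1-\gamma)/\gamma}}{X^*_tH_t} = 1 + \beta(t)\frac{E_t}{X^*_t}.
\]
Adding the remaining $\theta$ and collecting terms, $\sigma\pi^*_t$ splits into a constant part equal to $\frac{1-\gamma}{\gamma}\theta + \theta$ and a part proportional to $\beta(t)E_t/X^*_t$ with coefficient $(\theta-\eta) + \frac{1-\gamma}{\gamma}\theta$.

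It then only remains to simplify both coefficients using the elementary identity $\frac{1-\gamma}{\gamma}+1 = \frac{1}{\gamma}$: the constant part collapses to $\theta/\gamma$, and the coefficient of $\beta(t)E_t/X^*_t$ collapses to $\theta/\gamma - \eta$. Dividing by $\sigma$ and recalling $\pi_M = \theta/(\gamma\sigma)$ gives precisely $\pi^*_t = \pi_M + \beta(t)\bigl(\pi_M - \frac{\eta}{\sigma}\bigr)E_t/X^*_t$, while the membership $\pi^*\in\Adm$ was already shown in Theorem~\ref{thm:optimal-terminal-wealth} (Optimal Terminal Wealth). The only step carrying any real content is the derivation of the closed form for $X^*_t$ through the martingale representation; once the exponents on $H_t$ are handled correctly there, everything else is routine algebraic bookkeeping and I expect no genuine obstacle.
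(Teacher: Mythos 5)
Your proposal is correct and follows essentially the same route as the paper: both derive the closed form $X^*_t = \alpha(t)H_t^{-1/\gamma} - \beta(t)E_t$ from Equation~\eqref{eq:proof-repl}, Equation~\eqref{eq:martingale-representation}, and Lemma~\ref{lem:martingale-representation}, then substitute this together with the formula for $K^*$ into $\pi^*_t = \frac{1}{\sigma}\bigl(\frac{K^*_t}{X^*_tH_t}+\theta\bigr)$ and simplify using $\frac{1-\gamma}{\gamma}+1=\frac{1}{\gamma}$. The only difference is cosmetic (you eliminate $\alpha(t)H_t^{-(1-\gamma)/\gamma}$ via $(X^*_t+\beta(t)E_t)H_t$ rather than simplifying the quotient directly), and your appeal to Theorem~\ref{thm:optimal-terminal-wealth} for $\pi^*\in\Adm$ is exactly what the paper relies on as well.
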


\begin{proof}
Fix $t\in[0,T]$.
From the proof of Proposition~\ref{prop:replication} (Replication), the optimal wealth process $X^*$ can be written as
\[
  X^*_t = \frac{1}{H_t}\Bigl(x + \int_0^t K^*_s\de W_s + \int_0^t E_sH_s\de s\Bigr);
\]
see Equation~\eqref{eq:proof-repl}.
Using Equation~\eqref{eq:martingale-representation} and Lemma~\ref{lem:martingale-representation} (Explicit Martingale Representation), we conclude that
\begin{align}
  X^*_t
  &= \frac{1}{H_t}\Bigl(\alpha(t)H_t^{-(1-\gamma)/\gamma} - \int_0^t E_sH_s\de s - \beta(t)E_tH_t + \int_0^t E_sH_s\de s\Bigr)\notag\\
  &= \alpha(t)H_t^{-1/\gamma} - \beta(t)E_t.\label{eq:optimal-wealth-process}
\end{align}
Plugging this and the expression for $K^*$ into the formula for the optimal risky fraction yields
\begin{align*}
  \pi^*_t
  = \frac{1}{\sigma}\Bigl(\frac{K^*_t}{X^*_tH_t} + \theta\Bigr)
  &= \frac{1}{\sigma}\Bigl(\frac{(\theta-\eta)\beta(t)E_t H_t + \frac{1-\gamma}{\gamma}\theta\alpha(t)H_t^{-(1-\gamma)/\gamma}}{\alpha(t)H_t^{-(1-\gamma)/\gamma} - \beta(t)E_tH_t} + \theta\Bigr)\\
  &= \frac{\theta}{\gamma\sigma} + \beta(t)\Bigl(\frac{\theta}{\gamma\sigma} - \frac{\eta}{\sigma}\Bigr)\frac{E_t}{\alpha(t)H_t^{-1/\gamma} - \beta(t)E_t}.
\end{align*}
Using the definition of $\pi_M$ and the expression for $X^*$ in Equation~\eqref{eq:optimal-wealth-process} yields the result.
\end{proof}

Summing up the results of this section, we can conclude that the optimal risky fraction $\pi^*$ takes the form
\[
  \pi^*_t = \pi_M + \beta(t)\Bigl(\pi_M - \frac{\eta}{\sigma}\Bigr)\frac{E_t}{X^*_t},
  \qquad t\in[0,T],
\]
where $\pi_M$ is the optimal risky fraction in the absence of a random endowment and
\[
  \beta(t) \defined \frac{\exp\bigl\{(\mu - r - \eta\theta)(T-t)\bigr\} - 1}{\mu - r - \eta\theta},\qquad t\in[0,T].
\]
From the expression for $\pi^*$, we see that the presence of a random endowment causes an additive shift in the optimal risky fraction by an amount of
\begin{equation}\label{eq:shift}
  \beta(t)\Bigl(\pi_M - \frac{\eta}{\sigma}\Bigr)\frac{E_t}{X^*_t},\qquad t\in[0,T].
\end{equation}
Note that
\[
  \beta(T) = 1
  \qquad\text{and}\qquad
  \frac{\de}{\de t}\beta(t) = -\exp\bigl\{(\mu - r - \eta\theta)(T-t)\bigr\} < 0,
  \qquad t\in[0,T],
\]
from which we see that $\beta$ is strictly positive and strictly decreasing.
Thus, the sign of the shift is determined entirely by the constant
\[
  \pi_M - \frac{\eta}{\sigma} = \frac{1}{\gamma\sigma}\bigl(\theta - \gamma\eta\bigr).
\]
If $\gamma\eta < \theta$, the agent invests more into the risky asset compared to an investor without a random endowment.
For this case to occur, it is necessary that the market price of risk $\theta$ is strictly positive.
Moreover, either the agent's relative risk aversion $\gamma$ or the volatility $\eta$ of the random endowment have to be sufficiently small.
Otherwise, if $\gamma\eta > \theta$, the agent generally invests less of their wealth into the risky asset in the presence of a random endowment.

The magnitude of the shift in Equation~\eqref{eq:shift} is driven by two factors: time $t$ and the endowment-to-wealth ratio $E/X^*$.
If the endowment-to-wealth ratio $E_t/X^*_t$ is large---as would typically be the case for young agents---the shift can be quite significant as the dependence of the shift on the ratio is linear.
Conversely, the shift vanishes as the endowment-to-wealth tends to zero.
Finally, the impact of time $t$ on the magnitude of the shift is for the most part determined by the function $\beta$.
Recalling that $\beta$ is strictly positive and strictly decreasing, it follows that younger agents with larger investment horizons tend to deviate more significantly from $\pi_M$ than more mature agents with short investment horizons.

\printbibliography

\end{document}